\def\DRAFT{}
\journal{ }
\newtheorem{theorem}{Theorem}
\newtheorem{lemma}{Lemma}
\theoremstyle{definition}
\newtheorem{remark}{Remark}
\patchcmd{\emailauthor}{(#2)}{}{}{}
\patchcmd{\urlauthor}{(#2)}{}{}{}
\newcommand{\ML}{M^{ll}} 
\newcommand{\erel}[1]{\textcolor{green}{(\textbf{Erel:} #1)}}
\newcommand{\er}[1]{\textcolor{green}{#1}}
\newcommand{\josue}[1]{\textcolor{red}{(\textbf{Josue:} #1)}}
\newcommand{\erel}[1]{}
\newcommand{\er}[1]{#1}
\newcommand{\josue}[1]{}
\begin{document}

\begin{frontmatter}
\title{Obvious Manipulations in Cake-Cutting}

\author[add2,add3]{Josu\'e Ortega\footnote{We are indebted to Thayer Morril, Herv\'e Moulin and Gabriel Ziegler for their helpful comments and to Sarah Fox, Jacopo Gambato and Fabian Spühler for proof-reading this paper. Any errors are our own.}}
\ead{j.ortega@qub.ac.uk}
\author[add4]{Erel Segal-Halevi}
\ead{erelsgl@gmail.com.}
\address[add2]{Queen's Management School, Queen's University Belfast, UK.}
\address[add3]{Center for European Economic Research, Mannheim, Germany.}
\address[add4]{Department of Computer Science, Ariel University, Israel.}

\begin{abstract}
In cake-cutting, strategy-proofness is a very costly requirement in terms of fairness: for $n=2$ it implies a dictatorial allocation, whereas for $n\geq 3$ it requires that one agent receives no cake. We show that a weaker version of this property recently suggested by Troyan and Morril, called {\it not-obvious manipulability}, is compatible with the strong fairness property of {\it proportionality}, which guarantees that each agent receives $1/n$ of the cake. Both properties are satisfied by the {\it leftmost leaves} mechanism, an adaptation of the Dubins - Spanier moving knife procedure. Most other classical proportional mechanisms in literature are obviously manipulable, including the original moving knife mechanism. Not-obvious manipulability explains why leftmost leaves is manipulated less often in practice than other proportional mechanisms.
\end{abstract}

\begin{keyword}
	cake-cutting \sep not-obvious manipulability \sep maximin strategy-proofness \sep leftmost leaves \sep prior-free mechanism design.\\
	{\it JEL Codes:} D63, D82.
\end{keyword}
\end{frontmatter}
	
	\newpage
	\setcounter{footnote}{0}

\section{Introduction}
\label{sec:introduction}
The division of a single good among several agents who value different parts of it distinctly is one of the oldest fair division problems, going as far back as the division of land between Abram and Lot (Genesis 13). Since its formalization as the cake-cutting problem \citep{steinhaus1948}, this research question has inspired a large interdisciplinary literature which has proposed mechanisms that produce fair allocations without giving agents incentives to misrepresent their preferences over the cake. Unfortunately, \cite{branzei2015dictatorship} show that any deterministic strategy-proof mechanism suggests a dictatorial allocation for $n=2$, and that one agent must receive no cake at all for $n\geq 3$, documenting a strong tension between fairness and incentives properties.

Nevertheless, recent results in applied mechanism design have shown that, even if mechanisms can be manipulated in theory, they are not always manipulated in practice. Some manipulations are more likely to be observed than others, particularly those which are salient or require less computation. Based on this observation, \cite{troyan2019obvious} have proposed a weaker version of strategy-proofness for direct mechanisms, called {\it not-obvious manipulability} (NOM). They define a manipulation as obvious if it yields a higher utility than truth-telling in either the best- or worst-case scenarios. A mechanism is NOM if it admits no obvious manipulation. Their notion of NOM is a compelling one, since it does not require prior beliefs about other agents' types, and compares mechanisms only based on two scenarios which are particularly salient and which require less cognitive effort to compute. They show that NOM accurately predicts the level of manipulability that different mechanisms experience in practice in school choice and auctions. 

In this paper, we provide a natural extension of NOM to indirect mechanisms, and show that the stark conflict between fairness and truth-telling in cake-cutting disappears if we weaken strategy-proofness to NOM. In particular, NOM is compatible with the strong fairness property of {\it proportionality}, which guarantees each agent $1/n$ of the cake. Both properties are satisfied by a discrete adaptation of the moving knife mechanism \citep{dubins1961}, in which all agents cut the cake and the agent with the smallest cut receives all the cake to the left of his cut and leaves. This procedure is also procedurally fair and easy to implement in practice. NOM is violated by most other classical proportional mechanisms, even by the original Dubins-Spanier procedure, which shows that theoretically equivalent mechanisms may have different ``obvious'' incentive properties for boundedly rational agents. NOM partially explains why leftmost leaves is manipulated less frequently than other cake-cutting mechanisms in practice \citep{kyropoulou2019fair}.

\section{Related Literature}

The cake-cutting problem has been studied for decades, given its numerous applications to the division of land, inheritances, and cloud computing \citep{brams1996,moulin2004}. Most of the cake-cutting literature studies indirect revelation mechanisms, in which agents are asked to reveal their valuation over the cake via specific messages such as cuts and evaluations. In particular, the computer science literature focuses on the so called \emph{Robertson--Webb mechanisms} \citep{robertson1998}, in which agents can only use two types of messages: either an agent \emph{cuts} a piece of the cake having a specific value, or he \emph{evaluates} an existing piece by revealing his utility for it. Most well-known mechanisms in the literature, such as cut and choose, can be expressed as a combination of these two operations.

The main result relating fairness and incentive properties in cake-cutting was put forward by \cite{branzei2015dictatorship}. They show that with three or more agents, every strategy-proof Robertson--Webb mechanism assigns no cake to at least one agent. Their result builds on a weaker result by \cite{kurokawa2013cut} regarding mechanisms that have a bounded number of messages. 

To allow for some degree of fairness and truth-telling, the literature has considered four research avenues. 

The first one is to use \emph{randomized mechanisms}. For example, \cite{mossel2010} show that truth-telling in expectation and approximate proportionality can be obtained with probabilistic mechanisms. In this paper we consider deterministic mechanisms only.

The second is to restrict the set of possible valuations over the cake. In this line, \cite{chen2013} provide a complex deterministic mechanism that is both strategy-proof and proportional for a restricted class of utilities called \emph{piecewise linear}. Their mechanism may waste pieces of cake which remain unassigned, something that never occurs with the proportional and NOM mechanism we identify. 

A third research avenue studies allocation mechanisms in which an agent can only increase his utility by $\epsilon$ by misrepresenting his preferences, compared to the utility he obtains from truth-telling. \cite{menon2017deterministic} and \cite{kyropoulou2019fair} obtain bounds on the amount of extra utility that agents can guarantee by lying in proportional cake-cutting mechanisms.


In this paper, we explore a fourth way to escape the conflict between fairness and truth-telling, which is relaxing the strategy-proofness definition to the one of NOM. NOM is a stronger version of an existing weak truth-telling property in the cake-cutting literature called \emph{maximin strategy-proofness}, proposed by \cite{brams2006,brams2008proportional}. We discuss the relationship between these two concepts in detail in the next section. 

\section{Model}
\subsection{Outcomes, agents and types}
In a general mechanism design problem, there is a set of possible outcomes $\mathcal X$, a set of agents $N$, and a set of possible agent-types $\Theta= \Pi_{i \in N} \Theta_i$. 
In the particular case of cake cutting, there is an interval $[0,1]$ called the \emph{cake}. A union of subintervals of $[0,1]$ is called a \emph{piece} of the cake.
The outcome-set $\mathcal X$ is the set of \emph{allocations} --- ordered partitions of $[0,1]$ into $n$ disjoint pieces. 
$X\in \mathcal X$ denotes an arbitrary allocation and $X_i$ denotes the piece allocated to agent $i$. 

The type-set $\Theta$ is the set of possible \emph{utility functions}, where a utility function is a non-atomic measure --- an additive function that maps each piece to a non-negative number.
The utility of the entire cake is normalized to $1$. 
The utility of agent $i$ with type $\theta_i$ is denoted by $u_i$. That $u_i$ is non-atomic allows us to ignore the boundaries of intervals. Another implication of nonatomicity is that $u_i$ is divisible, i.e. for every subinterval $[x, y]$ and $0\leq \lambda \leq 1$, there exists a point $z \in [x, y]$ such that $u_i([x, z];\theta_{i}) = \lambda \, u_i([x, y];\theta_{i})$. \footnote{Both $u_i$ and $\theta_i$ are equivalent --- the type of an agent is the agent's utility, so our notation is slightly redundant. Nevertheless, we use it to make the comparison with \cite{troyan2019obvious} straightforward.}

\subsection{Extensive forms and mechanisms}
An \emph{extensive form} is an arborescence $A$ that consists of a set of labelled nodes $H$ and a set of directed edges $E$.\footnote{An arborescence is a directed, rooted tree in which all edges point away from the root.} The root node is $h_0$. Each terminal node is labelled with an allocation of the cake. Each non-terminal node $h$ is labelled with a non-empty subset of agents $N(h)$ who have to answer a query about their type. 
$N(h)$ is said to be the set of players \emph{active at $h$}. 

In a general extensive form, the query may be arbitrary, for example asking agents to  fully reveal their type. 
In a \emph{Robertson--Webb extensive form}, only two types of queries are allowed:
\begin{enumerate}
	\item \emph{Cut query}: the query cut$(i;x,\alpha)$ asks agent $i$ the minimum point $y \in [0; 1]$ such that $u_i([x,y];\theta_i)=\alpha$; where $\alpha \in [0,1]$ and $x$ is an existing cut point or 0 or 1. The point $y$ becomes a cut point.
	\item \emph{Eval query}: 
	the query eval$(i; x, y)$
	asks agent $i$ for its value for the interval $[x, y]$, that is, eval$(i; x, y) = u_i([x, y];\theta_i)$ where $x,y$ are a previously made cut points or 0 or 1.
\end{enumerate}	

All agents must reply to the queries and their answers must be dynamically consistent, meaning that there must be a possible type for which such a sequence of answers is truthful.\footnote{This is a standard requirement \citep{branzei2015dictatorship}. For example, if agent $i$ is asked the query eval$(i; 0.3, 1)$ and replies a value of $0$, then if later asked the query cut$(i;0,0.5)$ his answer must be in the interval $[0,0.3)$.} However, agents' answers may be untruthful. For every possible combination of agents' answers to the queries at node $h$, there is an edge from $h$ to another node $h'$.
Thus, answers to queries and edges have a one-to-one relationship. Note that, since the valuations are real numbers, there are uncountably many edges emanating from each node.
Figure \ref{fig:extensiveform} illustrates a small subset of an extensive form $A$.

\begin{figure}[ht]
\centering
\begin{center}
	\begin{tikzpicture}
	\tikzstyle{every node}=[draw=black, ellipse, minimum width=100pt,align=center]
	\node (h0) at (0,1) {cut$(1;0,0.5)$};
	\node[circle,draw=white, fill=white,   inner sep=0pt,minimum size=9pt] (h1) at (-2.5,-2) {};
	\node[circle,draw=white, fill=white,   inner sep=0pt,minimum size=9pt] (h11) at (-4,-2) {};
	\node (h2) at (.5,-2) {eval$(2; 0, 0.6)$};
	\node[circle,draw=white, fill=white,   inner sep=0pt,minimum size=9pt] (h3) at (2.7,-2) {};
	\node[circle,draw=white, fill=white,   inner sep=0pt,minimum size=9pt] (h33) at (4.2,-2) {};
	\node (h7) at (-3.5,-4) {$[0,0.6),[0.6,1]$};
	\node (h8) at (3.5,-4) {$[0.6,1],[0,0.6)$};
\path[every node/.style={sloped,anchor=south,auto=false}]
(h0) edge              node {} (h1)            
(h0) edge              node {0.01} (h11)            
(h0) edge              node {0.6} (h2)
(h0) edge              node {0.9} (h33)
(h0) edge              node {0.7} (h3)
(h2) edge              node {0.2 } (h7)
(h2) edge              node {0.8} (h8);

\draw[->] (h0) -- (h1);
\draw[->] (h0) -- (h11);
\draw[->] (h0) -- (h2);
\draw[->] (h0) -- (h3);
\draw[->] (h0) -- (h33);
\draw[->] (h2) -- (h7);
\draw[->] (h2) -- (h8);

\draw[dotted,line width=0.5mm] (-1.2,-1)--(0,-1);
\draw[dotted,line width=0.5mm] (.5,-1)--(1.7,-1);
\draw[dotted,line width=0.5mm] (-1.1,-3.3)--(1.4,-3.3);
	\end{tikzpicture}
	\caption{An example of an extensive form representing cut and choose.}
	\label{fig:extensiveform}
\end{center}
\end{figure}
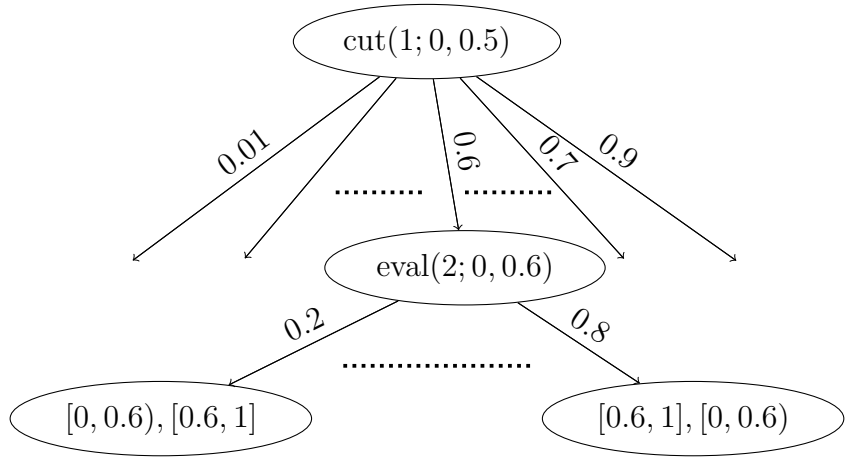

We make the following two standard informational assumptions \citep{moore1988subgame}. First, at each node $h$ all agents know the entire history of the play. Second, if more than one agent is active at node $h$, they answer their corresponding queries simultaneously. 

The indirect mechanism $M$ corresponding to extensive form $A$ takes as input the answers to each query in $A$ and returns the allocation obtained at the corresponding terminal node. This is, the input to the mechanism consists of a path from the root node to a terminal node labelled with an allocation. At node $h$, the edge corresponding to a truthful answer by all agents in $N(h)$ with a type profile $\theta$ is denoted by $e^h(\theta)$, and $E(\theta)=\{e \in E : e^h(\theta) \text{ for some } h \in H\}$. This is, $E(\theta)$ is the set of all edges corresponding to truthful reports. The set $E(\theta)$ is an edge cover, i.e is a set of edges incident to every node in $A$.\footnote{Edge covers represent a complete contingent plan of answers.} An alternative edge cover in which all agents except $i$ answer all queries truthfully, and agent $i$ answers the queries as if he was of type $\theta_i'$, is denoted by $E(\theta_i',\theta_{-i})$. By our assumption of dynamically consistent answers, each possible set of untruthful answers by each agent is associated to a possible type in $\Theta_i$. $M_i(E(\theta_i,\theta_{-i}))$ denotes agent $i$'s individual allocation in mechanism $M$ when agents' answers correspond to the set of edges $E(\theta_i,\theta_{-i})$. 
A special case of our definition is a direct revelation mechanism, when there is a unique non-terminal node --- the root $h_0$ ---, $N(h_0)=N$ and the query for each agent is to fully reveal his type.

A mechanism $M$ is called {\it proportional} if it guarantees to a truthful agent a utility of $1/n$ regardless of what every other agent reports, i.e. $u_i(M_i(E(\theta_i,\theta'_{-i});\theta_i))\geq 1/n$ for all $i \in N$, for all $\theta_i \in \Theta_i$, and all $\theta'_{-i} \in \Theta_{-i}$. 

\subsection{Manipulations}

A \emph{manipulation} by agent $i$ is a set of (partially) untruthful answers that satisfy the consistency requirement, i.e., correspond to some type $\theta_i' \neq \theta_i$.

A manipulation is called \emph{profitable} (for mechanism $M$ and type $\theta_i$)
if $u_i(M_i(E(\theta'_i,\theta_{-i}));\theta_i) > u_i(M_i(E(\theta_i,\theta_{-i}));\theta_i)$. If some type of some agent has a profitable manipulation, then mechanism $M$ is called \emph{manipulable}.

A mechanism is called {\it strategy-proof} (SP) if it is not manipulable, that is,  $u_i(M_i(E(\theta_i,\theta_{-i}));\theta_i) \geq u_i(M_i(E(\theta'_i,\theta_{-i}));\theta_i)$ for all $i \in N$, for all $\theta_i, \theta'_i \in \Theta_i$, and all $\theta_{-i} \in \Theta_{-i}$.  
This definition of SP is very demanding: it requires a truthful report from \emph{every} agent, \emph{every} time he is asked to answer a query.\footnote{Strategy-proofness is a notion more commonly used for direct-revelation mechanisms. This extension to indirect mechanisms follows the definition of \citet{kurokawa2013cut}.} Even if just one type of an agent has an incentive to give a non-truthful answer to a single query, the mechanism is no longer strategy-proof. 

\citet{troyan2019obvious} suggest a weaker version of SP for direct-revelation mechanisms, which only compares the best and worst case scenarios from both truthful and untruthful behavior. We extend their definition to indirect mechanisms as follows. A mechanism $M$ is {\it not-obviously manipulable} (NOM) if, for any profitable manipulation 
of agent $i$, corresponding to pretending being a type $\theta_i'$, the following two conditions hold:\footnote{They present this definition using maximum and mininum, which may not exists with a continuous cake; instead we consider the supremum and infimum.}
\begin{eqnarray}
\label{eq:nom-inf}
\inf_{\theta_{-i}} u_i(M_i(E(\theta'_i,\theta_{-i}));\theta_i) \leq \inf_{\theta_{-i}} u_i(M_i(E(\theta_i,\theta_{-i}));\theta_i)
\\
\label{eq:nom-sup}
\sup_{\theta_{-i}} u_i(M_i(E(\theta'_i,\theta_{-i}));\theta_i) \leq \sup_{\theta_{-i}} u_i(M_i(E(\theta_i,\theta_{-i}));\theta_i)
\end{eqnarray}
	
If any of the previous two conditions do not hold for some $\theta'_i$, then $\theta'_i$ is said to be an \emph{obvious manipulation} for agent $i$ with type $\theta_i$; and the mechanism $M$ is \emph{obviously manipulable}.
In other words, a manipulation is obvious if it either makes the agent better off in the worst-case, or if it makes him better off in the best-case. 
This is a strengthening of the \emph{maximin strategy-proofness} defined by 
\cite{brams2006,brams2008proportional}, who only impose condition (1). Brams, Jones and Klamler write: {\it ``We assume that players try
to maximize the minimum-value pieces (maximin pieces) that they can guarantee for
		themselves, regardless of what the other players do. In this sense, the players are
		risk-averse and never strategically announce false measures if it does not guarantee
		them more-valued pieces"}.
	
Both relaxations of strategy-proofness have the advantages that agents do not require beliefs about other agents' actions, and that comparing best- and worst-cases scenarios requires less cognitive effort than comparing expected values using an arbitrary distribution over agents' types. However, maximin strategy-proofness is a mild property that is satisfied by a very large class of mechanisms.\footnote{\cite{chen2013} call maximin SP a ``strikingly weak notion of truthfulness''.} On the other hand, NOM is a property that most classical proportional mechanisms in the literature fail, with the {\it leftmost leaves} mechanism being a remarkable exception (Theorem \ref{thm:leftmost-leaves}). 


Before we proceed to present some examples, we clarify a difference between the notion of indirect mechanisms that we use (standard in the computer science literature) and the definition in economics \citep{moore1988subgame}. In economics, each non-terminal node is labelled not with a query, but with an action. Thus, to define leftmost leaves one could just specify that, at each period $t$, an agent's possible actions are to cut a cake at any point, and the one who provides the smallest cut exits with the leftmost part. There is no question where to cut: agents simply cut the cake wherever they want. But, if one does not specify the query asked to each player, it is not {at all} clear what truth-telling behavior is. Where should an agent with uniform valuation over the cake cut when dividing a cake against 3 agents if he is not asked a specific question? 
The computer science definition emphasizes that the mechanism designer can make a mechanism hard to manipulate by cleverly choosing which queries to ask at each node in the extensive form. Indeed, if we slightly modify the queries asked in \emph{leftmost leaves}, the mechanism becomes obviously manipulable, as we show after the proof of Theorem \ref{thm:leftmost-leaves}.

\section{Results}
We first show, using an example, that many classic cake-cutting procedures are obviously manipulable.

\begin{theorem}
\label{thm:om}
The cake-cutting mechanisms known as cut-and-choose, cut-middle, Banach-Knaster last diminisher and Dubins-Spanier moving knife are all obviously manipulable.
\end{theorem}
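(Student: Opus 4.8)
The plan is to establish obvious manipulability for each of the four mechanisms by exhibiting, for each one, a concrete type $\theta_i$ together with a misreport $\theta_i'$ and showing that one of the two NOM inequalities~\eqref{eq:nom-inf}--\eqref{eq:nom-sup} fails. For almost all of these mechanisms the natural target will be the best-case inequality~\eqref{eq:nom-sup}: classical proportional mechanisms typically guarantee exactly $1/n$ in the worst case no matter what an agent reports, so the worst-case side is flat and uninformative; the leverage comes from showing that a lie can push the best-case outcome strictly above the truthful best case. It therefore suffices to work with $n=2$ (and note the examples extend to larger $n$ by adding agents who value only a negligible sliver), since with two agents the combinatorics of each mechanism is simplest.

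Concretely, I would proceed mechanism by mechanism. For \textbf{cut-and-choose}, take agent~$1$ to be the cutter with a uniform valuation: truthfully he cuts at $1/2$, and whichever piece agent~$2$ leaves him is worth exactly $1/2$ for every $\theta_2$, so both his truthful infimum and supremum equal $1/2$. Now let him misreport a type concentrated near the left endpoint so that he cuts at some small $x$; against an agent~$2$ who happens to take the short left piece $[0,x]$, agent~$1$ receives $[x,1]$, worth $1-x > 1/2$ under his true uniform valuation. Hence the best-case supremum after lying exceeds $1/2$, violating~\eqref{eq:nom-sup}; one checks the manipulation is also profitable for at least one $\theta_2$, as required by the definition. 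For \textbf{cut-middle} (where the cake is cut at the average of the reported midpoints, or the analogous averaging rule) the same idea applies: misreporting an extreme midpoint shifts the average cut in the manipulator's favour in the best case while the worst case stays at $1/2$. For \textbf{Banach--Knaster last-diminisher} and the \textbf{Dubins--Spanier moving knife}, I would again use an agent with (say) uniform value facing a second agent whose valuation is heavily mass-concentrated: truthfully the uniform agent trims/accepts at the $1/2$ mark and is guaranteed exactly $1/2$ with no upside; by instead trimming more aggressively (reporting a smaller value for the left part) he can, in the favourable realization where the other agent's concentrated mass lies entirely outside the piece he ultimately receives, end up with strictly more than $1/2$ of his true value, again breaking the best-case inequality while leaving the worst-case inequality slack. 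In each case I would spell out the specific numeric $\theta_i, \theta_i'$ and the $\theta_{-i}$ attaining the improved best case, and verify profitability.

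The routine part is the bookkeeping: checking that each chosen misreport is a legal (dynamically consistent) type, that it is genuinely profitable against \emph{some} opponent profile (so that the NOM conditions are actually triggered), and that the truthful best case really is no larger than $1/2$ for the uniform-type manipulator. The one genuine subtlety --- the step I expect to be the main obstacle --- is the supremum on the right-hand side of~\eqref{eq:nom-sup} for the truthful report: I must confirm that a truthful agent cannot already achieve more than $1/2$ in some lucky best case, since if the truthful supremum were itself large the inequality might not be violated. For cut-and-choose with a uniform cutter this is immediate (every piece is worth exactly $1/2$), but for the moving-knife and last-diminisher mechanisms I need to argue carefully that, given the query the agent is actually asked at his decision node, a truthful answer pins down a piece worth exactly $1/2$ of his value regardless of the others, so that his truthful supremum is exactly $1/2$ while the lie's supremum strictly exceeds it. Handling the continuum of opponent types in taking these suprema and infima (rather than maxima and minima) is notationally fiddly but causes no real difficulty, since the relevant extreme values are attained or approached by explicit simple valuations.
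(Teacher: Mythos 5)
Your proposal is correct and follows essentially the same route as the paper: a concrete type and misreport for each mechanism, with the violation located in the best-case inequality \eqref{eq:nom-sup} (and, for last-diminisher, in both inequalities), plus the same key observation that for the moving knife and last diminisher one must condition on the query actually posed at the agent's decision node to pin the truthful supremum at $1/2$. One small correction to your bookkeeping: for cut-middle the truthful supremum is not $1/2$ but $3/4$ (a truthful uniform agent benefits when the opponent cuts near an endpoint, which drags the average cut toward that endpoint), yet the lie's supremum is still arbitrarily close to $1$, so the violation survives exactly as in the paper.
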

\begin{proof}
For simplicity, consider a cake-cutting problem with piecewise uniform valuations, i.e. agents either like or dislike certain intervals, and each desirable interval of the same length has the same value. 
One agent, called Blue, has valuations as  in Figure \ref{fig:example1}.

\begin{figure}[h]
	\centering
		\begin{tikzpicture}

			\fill[blue] (0,0) rectangle (1,1);
			\fill[blue] (9,0) rectangle (10,1);
			\fill[blue] (4,0) rectangle (6,1);
			\draw[step=1cm,gray,very thin] (0,0) grid (10,1);
			\foreach \x in {1,2,3,4,5,6,7,8,9}
			\draw (\x cm,1pt) -- (\x cm,1pt) node[anchor=north] {$.\x$};
			\end{tikzpicture}
			\caption{The preferences of a (blue) agent over the cake.} \label{fig:example1}
		\end{figure}
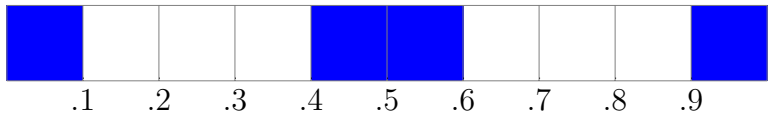

\begin{figure}[h]
	\centering
	\begin{subfigure}{.45\linewidth}
		\centering
		\begin{tikzpicture}[scale=.55]
		\fill[blue] (0,.5) rectangle (1,1);
		\fill[blue] (9,.5) rectangle (10,1);
		\fill[blue] (4,.5) rectangle (6,1);
		\fill[red] (0,0) rectangle (4,.5);
		
		\draw[->, ultra thick] (4,2) -- (4,1.2);
		\draw[step=1cm,gray,very thin] (0,0) grid (10,1);
		\draw[green, line width=0.1cm] (4,0) rectangle (10,1);
		\foreach \x in {1,2,3,4,5,6,7,8,9}
		\draw (\x cm,1pt) -- (\x cm,1pt) node[anchor=north] {$.\x$};
		\end{tikzpicture}
		\caption{Cut-and-choose.} \label{fig:M1}
	\end{subfigure}%
	\begin{subfigure}{.45\linewidth}
		\centering
		\begin{tikzpicture}[scale=.55]
		\fill[blue] (0,.5) rectangle (1,1);
		\fill[blue] (9,.5) rectangle (10,1);
		\fill[blue] (4,.5) rectangle (6,1);
		\draw[step=1cm,gray,very thin] (0,0) grid (10,1);
		
		\fill[red] (0,0) rectangle (.5,.5);
		\draw[green, line width=0.1cm] (.625,0) rectangle (10,1);
		\draw[->, ultra thick] (1,2) -- (1,1.2);
		
		\foreach \x in {1,2,3,4,5,6,7,8,9}
		\draw (\x cm,1pt) -- (\x cm,1pt) node[anchor=north] {$.\x$};
		\end{tikzpicture}
		\caption{Cut-middle.} \label{fig:M2}
	\end{subfigure}
	\caption{Obvious manipulations for agent blue (black arrow), supporting preferences for the other agent (red), final allocation received by blue agent (green).}
\end{figure}
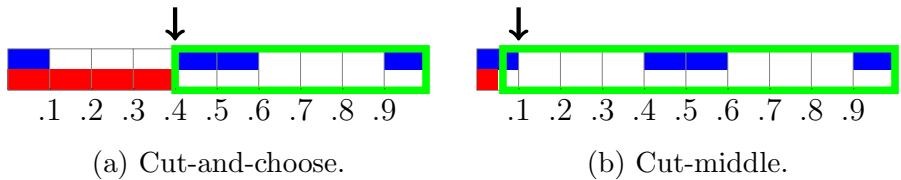

In the well-known {\it cut-and-choose} mechanism, Blue (the cutter) is asked  $\text{cut}(\text{blue};0,1/2)$. Truthful behavior requires him to cut at $0.5$, guaranteeing a utility of $0.5$ in all cases. If Blue chooses a profitable manipulation instead, say to cut at $0.4$, the best case is that the other agent chooses the left piece of the cake, leaving to Blue a utility of $0.75$.
Thus, in inequality \eqref{eq:nom-sup},
the supremum at the left-hand side is at least $0.75$ while the supremum at the right-hand side is $0.5$.
Cut-and-choose is therefore obviously manipulable. 
		
In the {\it cut-middle} mechanism, both agents are asked $\text{cut}(i;0,1/2) = x_i$ simultaneously, and the cake is divided at $\frac{x_1+x_2}{2}$, with each agent obtaining the part of the cake which contains his cut. If Blue is truthful and cuts at $0.5$, the best case is that the other agent cuts at $\epsilon$ (or $1-\epsilon$) and thus the cut point becomes $\frac{0.5+\epsilon}{2}\approx 0.25$, so Blue receives $0.75$ utility. 
Nevertheless, if Blue chooses a profitable manipulation such as cutting at $\epsilon$, the best that could happen is that the other agent cuts at $\eta < \epsilon$, and thus Blue receives a utility almost equal to $1$. 
In inequality \eqref{eq:nom-sup},
the supremum at the left-hand side is $1$ and the supremum at the right-hand side is $0.75$.

We conclude that cut-middle, too, is obviously manipulable.

Cut-and-choose and cut-middle are mechanisms for dividing a cake among two agents. We now turn to mechanisms that can be used to divide cake among two or more agents.

	
First, consider the {\it Banach-Knaster last diminisher} mechanism \citep{steinhaus1948}, in which agents are assigned a fixed order. The first agent is asked the point $x_1=\text{cut}(1;0,1/n)$, 
and is tentatively assigned the piece $[0, x_1]$.
Then, the second agent has an option to ``diminish'' this piece: 
he is asked the point $x_2=\text{cut}(2;0,1/n)$, and 
if $x_2<x_1$, then 
the previous tentative assignment is revoked, and agent 2 is now tentatively assigned the piece $[0, x_2]$.
This goes on up to agent $n$. Then, the tentative assignment becomes final: some agent $k$ receives the piece $[0,x_k]$ and the other agents recursively divide the remaining cake $[x_k,1]$.

This procedure is obviously manipulable even with two agents. 
For example, 
consider 
Figure \ref{fig:example1}(a).
Suppose agent 1 answers truthfully $x_1 = 0.5$.
If agent 2 answers truthfully $x_2 = 0.2$, then he gets a value of exactly $0.5$. 
A profitable manipulation for agent 2 is to answer $x_2' = x_1 - \epsilon$: it yields a utility of $1$.
This is true in both the best- and worst-case scenarios, which are the same in this case.


We turn to the {\it Dubins-Spanier moving knife} \citep{dubins1961}, in which a knife moves from 0 to 1 until the first agent stops it at the point $x$ for which $u_i([0,x];\theta_i)=0.5$, receiving himself the interval $[0,x]$.
The best that can happen to a truthful agent is that the other agent cuts the cake at $\epsilon$, so that he receives all the remaining cake which gives him almost 1 utility. However, once the moving knife has reached point $x$, a truthful agent should stop the knife, implying that he gets $0.5$ with certainty, whereas a profitable manipulation would yield a higher utility in the best-case scenario (one example is to stop the knife at $1-\epsilon$). Thus, Dubins-Spanier moving knife is also obviously manipulable. 
\end{proof}

We now present a slight variation of the Dubins-Spanier moving knife mechanism, in which, in each period, all agents are asked the cut query simultaneously rather than sequentially.
We call this variant \textbf{leftmost leaves}.
The leftmost leaves mechanism works as follows:
\begin{itemize}
\item In period $1$ all agents are asked $\text{cut}(i;0,1/n)$. The agent who cuts the cake at the smallest point (denoted $x^1$) leaves with the interval $[0,x^1]$. 
In case of a tie, the agent with the smallest index of all those who cut at $x^1$ leaves with $[0,x^1]$.
\item In period 2, every remaining agent $i$ is asked  $\text{cut}(i;x^1,\frac{u_i([x^{1},1];\theta_i)}{n-1})$.
The agent who submits the smallest point (denoted $x^2$) leaves with the interval $(x^1,x^2]$. 
In case of a tie, the agent with the smallest index of all those who cut at $x^2$ leaves with $(x^1,x^2]$.

\item 
In period $t$, all remaining agents are asked $\text{cut}(i;x^{t-1},\frac{u_i([x^{t-1},1];\theta_i)}{n-t+1})$. The procedure continues until only one agent remains. That agent receives $(x^{n-1},1]$. 
	\end{itemize}

Despite leftmost leaves being equivalent (in a sense in which we describe below) to the Dubins-Spanier moving knife mechanism, they differ in terms of obvious manipulability.

	
\begin{theorem}
\label{thm:leftmost-leaves}
The leftmost leaves mechanism is proportional and not-obviously-manipulable.
\end{theorem}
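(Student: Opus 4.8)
The plan is to prove the two properties in turn. \emph{Proportionality} will follow from a one-line induction on the periods of the mechanism. \emph{Not-obvious manipulability} will be reduced to three observations: (i) a truthful agent's worst-case utility is exactly $1/n$; (ii) a truthful agent's best-case utility is $1$; and (iii) a manipulating agent's worst-case utility is at most $1/n$. Given (i)--(iii), inequality \eqref{eq:nom-sup} is immediate (the supremum on the right is $1$, which no report can exceed), and inequality \eqref{eq:nom-inf} is exactly (iii) $\le$ (i). So the real work is in (iii).

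For \textbf{proportionality}, fix an agent $i$ reporting truthfully, fix arbitrary reports of the others, and let $0=x^0<x^1<x^2<\cdots$ be the cut points realized by the mechanism. By induction on $t$, as long as agent $i$ is still present at the start of period $t$ we have $u_i([x^{t-1},1];\theta_i)\ge (n-t+1)/n$: the base case $t=1$ is the normalization $u_i([0,1];\theta_i)=1$, and for the step, agent $i$'s period-$t$ cut is the point $c$ with $u_i([x^{t-1},c];\theta_i)=u_i([x^{t-1},1];\theta_i)/(n-t+1)$, so since the accepted cut satisfies $x^t\le c$, agent $i$ keeps $u_i([x^t,1];\theta_i)\ge u_i([x^{t-1},1];\theta_i)\cdot\frac{n-t}{n-t+1}\ge\frac{n-t}{n}$. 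When agent $i$ eventually leaves in some period $t$ (possibly last, taking all of $(x^{t-1},1]$), the accepted cut coincides with his own, so he receives exactly $u_i([x^{t-1},1];\theta_i)/(n-t+1)\ge 1/n$.

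For the \textbf{easy directions of NOM}, note first that a truthful agent's worst case is at least $1/n$ by proportionality, and at most $1/n$ because the profile in which every other agent reports a valuation concentrated near the point $1$ pushes all their period-$1$ cuts strictly above agent $i$'s, so he leaves immediately with a piece of true value exactly $1/n$; hence (i). Symmetrically, the profile in which every other agent reports a valuation concentrated near $0$ lets agent $i$ leave last with a piece of true value arbitrarily close to $u_i((0,1];\theta_i)=1$, so a truthful agent's best case is $1$; hence (ii), and \eqref{eq:nom-sup} holds for \emph{every} report $\theta'_i$ since utilities are bounded by $1$.

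It remains to prove (iii): for every report $\theta'_i$, $\ \inf_{\theta_{-i}}u_i(M_i(E(\theta'_i,\theta_{-i}));\theta_i)\le 1/n$. Let $0=c^0<c^1<\cdots<c^n=1$ be the partition of the cake into $n$ pieces of equal $\theta'_i$-value, which exists by non-atomicity; a short computation shows that $c^t$ is precisely the point agent $i$ (reporting $\theta'_i$) would cut at in period $t$ whenever the accepted cuts of periods $1,\dots,t-1$ were $c^1,\dots,c^{t-1}$. The key step is to show that for each $k\in\{1,\dots,n\}$ the infimum is at most $u_i((c^{k-1},c^k];\theta_i)$: for small $\eta>0$, let $k-1$ of the other agents report slightly left-perturbed copies of $\theta'_i$, so that they are the ones leaving in periods $1,\dots,k-1$ with accepted cuts converging to $c^1,\dots,c^{k-1}$ as $\eta\to 0$, and let the remaining $n-k$ agents report slightly right-perturbed copies, so that their period-$k$ cuts lie above agent $i$'s (each such report is an honest report of an actual valuation, so dynamic consistency is automatic). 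Then agent $i$ leaves in period $k$, and by non-atomicity of $\theta_i$ the true value of his piece tends to $u_i((c^{k-1},c^k];\theta_i)$ as $\eta\to 0$. Therefore
\[
\inf_{\theta_{-i}}u_i(M_i(E(\theta'_i,\theta_{-i}));\theta_i)\ \le\ \min_{1\le k\le n}u_i((c^{k-1},c^k];\theta_i)\ \le\ \frac1n\sum_{k=1}^{n}u_i((c^{k-1},c^k];\theta_i)\ =\ \frac1n ,
\]
which is exactly the truthful worst case from (i), completing \eqref{eq:nom-inf}. The main obstacle is this last construction — forcing agent $i$ to leave in an arbitrary prescribed period with (a limit of) his $k$-th $\theta'_i$-equal piece while correctly tracking how the realized cut points and the tie-breaks shift under the perturbations; the rest is bookkeeping.
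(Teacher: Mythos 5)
Your proof is correct in substance, and while the proportionality argument and the treatment of inequality \eqref{eq:nom-sup} coincide with the paper's (truthful best case is the whole remaining cake, which nothing can beat), your route to inequality \eqref{eq:nom-inf} is genuinely different. The paper proceeds period-by-period: Lemma \ref{lem:inf} computes the truthful worst-case continuation value $u_i([x^{t-1},1];\theta_i)/(n-t+1)$ at every period $t$, and then a recursive case analysis ($\hat x_i^t < x_i^t$ versus $\hat x_i^t > x_i^t$) shows that any deviation forces the agent, in the worst case, either to take a smaller piece immediately or to be pushed all the way to the last piece, which is worth even less. You instead argue globally: a consistent manipulation is a type $\theta'_i$, its equipartition points $c^0<\dots<c^n$ carve the cake into $n$ pieces whose \emph{true} values sum to $1$, the adversary can force agent $i$ to receive (a limit of) any prescribed one of them, and pigeonhole gives a worst case of at most $1/n$. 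This is the classical argument for maximin strategy-proofness of Dubins--Spanier-type procedures; it is arguably cleaner and avoids the paper's somewhat delicate "must keep deviating upward" recursion, at the price of the adversarial-profile construction you flag. That construction does go through, but the point to nail down is that the realized cuts and agent $i$'s induced cuts both converge to $c^1,\dots,c^k$ as the perturbation vanishes: this uses continuity of the non-atomic measures together with the \emph{minimality} clause in the cut query (without it $z^t\to c^t$ can fail), and one must order the perturbations so that in each period $t\le k-1$ the designated leaver undercuts agent $i$'s \emph{realized} period-$t$ cut, which may sit noticeably below $c^t$ before the limit; also $c^n$ need not equal $1$ if $\theta'_i$ puts no mass near the right end, so the last piece should be taken as $(c^{n-1},1]$ (the sum is still $1$). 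By comparison, the paper's finer Lemma \ref{lem:inf} is reused later (in the Remark on the obviously manipulable variant), which your period-$1$-only bound would not directly supply; your approach, in exchange, handles all manipulations of $\theta'_i$ at once rather than reasoning about a first period of deviation.
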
 

Before presenting the proof, we present a few remarks about this result and its possible extensions. 

First, note that leftmost leaves is theoretically equivalent to Dubins-Spanier moving knife mechanism, in that when applied to truthful agents with the same types, both mechanisms always suggest the same allocation. How can mechanisms that are theoretically equivalent, such as the two we just presented, rank differently in terms of incentives? This idea goes back to \cite{li2017obviously}, who shows that two equivalent mechanisms, such as the ascending auction and the second price auction, in which bidding truthfully is a weakly dominant strategy, are different in terms of incentive properties for boundedly rational agents. The intuition in both results is similar: both in the second price auction and in leftmost leaves, agents have no restriction in the prior about their opponents' types when they reveal their type through either their bids or their cuts; whereas in both the ascending auction and the moving-knife procedure, the fact that the knife or the clock has reached some point tells the agents' something about their opponents' types, and thus modifies what to expect in the best- and worst-case scenarios.

Second, the leftmost leaves mechanism satisfies several other desiderata that make it a good candidate to divide goods in practice. Is not only proportional and NOM, but also procedurally fair (up to tie-breaks) \citep{crawford1977,nicolo2008}, since agents' identities do not affect the allocation produced. It also generates an assignment of a connected piece of cake for each agent, a desirable property for applications such as the division of land \citep{segal2017fair}. 

Third, a follow-up question to Theorem \ref{thm:leftmost-leaves} is whether leftmost leaves is the only proportional and NOM mechanism in cake-cutting. The answer is no, as leftmost leaves can be slightly modified in several ways retaining both properties. One of such modifications is to start cutting the cake from the left instead of from the right.\footnote{\emph{Rightmost leaves} is the only mechanism to divide cake among two agents that is weakly Pareto optimal, proportional and resource monotonic under specific restrictions on agents' utilities \citep{segal2018resource}.} Another less trivial one is a modification of the protocol of \cite{even1984}, which works exactly the same as leftmost leaves for $n=2$ but requires fewer queries for larger values of $n$ (the proof is an almost verbatim copy of the proof of Theorem \ref{thm:leftmost-leaves}, so we omit it).\footnote{The ``leftmost leaves Even-Paz'' mechanism works as follows (for clarity let $n$ be a power of 2). Given a cake $[y,z]$, all agents choose cuts $x_i$ such that $u_i([y,x_i];\theta_i)=u_i([y, z];\theta_i)/2$. Let $x^*$ be the median of the $n$ cuts, rounded to the nearest smallest integer. Then the procedure breaks the cake-cutting problem into two: all agents who choose cuts $x_i \leq x^*$ are to divide the cake $[y,x^*)$, whereas all agents who chose cuts above $x^*$ are to divide the cake $[x^*,z]$.  Each half is divided recursively among the $n/2$ partners assigned to it. When the procedure is called with a singleton set of agents $\{i\}$ and an interval $I$ it assigns $X_i = I$. For example, if $n=4$, agents cut the cake in two equivalent pieces and the cake is cut at the second smallest cut.
Then the two agents with the smallest (largest) cuts play \emph{leftmost leaves} on the left (right) side of the cake.} Obtaining a characterization of all NOM and proportional mechanisms remains an interesting, albeit challenging, open question. 


\section{Proof of Theorem \ref{thm:leftmost-leaves}}

We denote the leftmost leaves mechanism by $\ML$. We show that $\ML$ is not-obviously manipulable. Since $\ML$ is an anonymous mechanism in which the identity of the agents does not play a role, it is necessary to check conditions \eqref{eq:nom-inf} and \eqref{eq:nom-sup} only for one agent, denoted by $i$.


First, we show that no manipulation yields a higher utility in the worst-case scenario.
We use the following lemma.

\begin{lemma}
\label{lem:inf}
 At period $t$,
$$
	\inf_{\theta_{-i}} u_i(\ML(E(\theta_i,\theta_{-i}));\theta_i) = \frac{u_i([x^{t-1},1];\theta_i)}{n-t+1}
$$
\end{lemma}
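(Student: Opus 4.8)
\textbf{Proof plan for Lemma \ref{lem:inf}.}

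The plan is to analyze the situation at period $t$ conditional on agent $i$ still being present, and to show two inequalities: that agent $i$ can always \emph{guarantee} at least $\frac{u_i([x^{t-1},1];\theta_i)}{n-t+1}$ against every profile $\theta_{-i}$, and that the adversary can drive agent $i$'s utility arbitrarily close to this value. For the guarantee direction, I would argue by backward induction on the number of remaining agents $m = n-t+1$. If $i$ is the last remaining agent ($m=1$), he gets the whole remaining cake $[x^{t-1},1]$, whose value is exactly $u_i([x^{t-1},1];\theta_i)/1$. For the inductive step, at period $t$ agent $i$ (truthfully) cuts at the point $y_i$ with $u_i([x^{t-1},y_i];\theta_i) = \frac{u_i([x^{t-1},1];\theta_i)}{m}$. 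Either $i$ has the smallest cut, in which case he leaves with $(x^{t-1},y_i]$ of value exactly $\frac{u_i([x^{t-1},1];\theta_i)}{m}$; or someone else cuts at $x^t \le y_i$ and leaves, so the remaining cake is $(x^t,1]$ with $u_i((x^t,1];\theta_i) \ge u_i((y_i,1];\theta_i) = \frac{m-1}{m}u_i([x^{t-1},1];\theta_i)$, and $m-1$ agents remain. By the induction hypothesis agent $i$ can then guarantee $\frac{u_i((x^t,1];\theta_i)}{m-1} \ge \frac{1}{m-1}\cdot\frac{m-1}{m}u_i([x^{t-1},1];\theta_i) = \frac{u_i([x^{t-1},1];\theta_i)}{m}$, which closes the induction. (A small point to be careful about: the recursive denominator in later periods is built from $u_i$ of the \emph{then}-current remaining interval, so the telescoping has to be done with respect to agent $i$'s own measure, which is exactly how the mechanism defines the queries — this is what makes the bound tight rather than lossy.)

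For the matching upper bound on the infimum, I would exhibit, for any $\varepsilon>0$, a profile $\theta_{-i}$ of the other $m-1 = n-t$ agents under which $i$ ends up with value at most $\frac{u_i([x^{t-1},1];\theta_i)}{m} + \varepsilon$. The idea: have all other agents value the remaining cake essentially like a tiny sliver just to the right of $x^{t-1}$ (or, more simply, have them all place their period-$t$ cuts at points slightly less than $y_i$ but arranged so that each successive ``winner'' shaves off only a negligible amount of $i$'s value), so that after $m-1$ rounds in which other agents leave, agent $i$ is the last one standing but the interval left for him is only slightly larger than $(y_i,1]$ — wait, that gives him $\approx \frac{m-1}{m}$, too much. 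The correct construction is the opposite: arrange the other agents so that agent $i$ is forced to leave in period $t$ with exactly his cut $(x^{t-1},y_i]$ — e.g. every other agent cuts at a point strictly greater than $y_i$ (they value $[x^{t-1},y_i]$ very little), so $i$ has the unique smallest cut and leaves immediately with value exactly $\frac{u_i([x^{t-1},1];\theta_i)}{m}$. This is consistent and needs no $\varepsilon$ at all, giving the infimum exactly.

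Combining the two directions yields equality, and since the right-hand side does not depend on $\theta_{-i}$, the infimum is attained. The main obstacle I anticipate is bookkeeping in the guarantee direction: one must verify that the recursive query structure really does make agent $i$'s worst-case payoff telescope exactly to $u_i([x^{t-1},1];\theta_i)/(n-t+1)$, i.e. that no value is ``lost'' when another agent departs (the remaining interval is always at least as valuable to $i$, relative to the reduced agent count, as the current one) and that dynamic consistency of the adversary's answers is respected in the tightness construction. Handling the tie-breaking rule (smallest index) is a minor wrinkle — it only ever helps or is neutral for the guarantee and can always be sidestepped in the adversarial construction by making $i$'s cut \emph{strictly} smallest.
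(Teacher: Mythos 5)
Your plan is correct and follows essentially the same route as the paper: the key step in both is the telescoping inequality showing that when another agent leaves first, the remaining cake is worth at least $\tfrac{n-t}{n-t+1}$ of the current value to agent $i$, so the per-capita guarantee never decreases, with the case where $i$'s own cut is smallest giving exact attainment. Your version is somewhat more explicit than the paper's (a formal induction on the number of remaining agents plus a concrete adversarial profile for tightness, which the paper treats as immediate), but the underlying argument is the same.
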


\begin{proof}
If $x_i^t$ is the smallest cut at period $t$, then the result is immediate. Otherwise,  
	$$
	u_i([x^{t-1},x^t];\theta_i)\leq \frac{u_i([x^{t-1},1];\theta_i)}{n-t+1}
	$$

	A direct implication is that the remainder of the cake $[x^t,1]$ must be worth at least $\frac{n-t}{n-t+1}$ of $u_i((x^{t-1},1);\theta_i)$, i.e.
	$$
	u_i([x^t,1];\theta_i) \geq \frac{n-t}{n-t+1} u_i([x^{t-1},1];\theta_i)
	$$
	
	Dividing both sides by $n-t$
	$$
	\frac{u_i([x^t,1];\theta_i)}{n-t} \geq \frac{ u_i([x^{t-1},1];\theta_i)}{n-t+1}
	$$
	Note that the left-hand side of the previous expression is the utility that the truthful agent $i$ would receive if he cut the cake at the smallest point in period $t+1$. If his cut was not the smallest at period $t+1$, a recursive formulation shows that he would receive a share of the cake that he values even more in period $t+2$. Thus, the worst that can happen to a truthful agent in period $t$ is to obtain a utility of $\frac{u_i([x^{t-1},1];\theta_i)}{n-t+1}$. 
 \end{proof}
 
Setting $t=1$ in Lemma \ref{lem:inf} shows that leftmost leaves is proportional:

$$
	\inf_{\theta_{-i}} u_i(\ML(E(\theta_i,\theta_{-i}));\theta_i) = \frac{u_i([0,1];\theta_i)}{n}
$$

Now we show that any manipulation at period $t$, $\hat x_i^t \neq x_i^t$, yields a utility weakly smaller than $\frac{u_i([x^{t-1},1];\theta_i)}{n-t+1}$ in the worst-case scenario.

If $\hat x_i^t < x_i^t$, in the worst-case scenario $\hat x_i^t$ would be the smallest cut, and agent $i$ would therefore receive the allocation $[x^{t-1},\hat x_i^t]$, which by construction yields a weakly lower utility than the allocation $[x^{t-1},x^t_i]$. 

In the other case, if $\hat x_i^t > x_i^t$, in the worst-case scenario the smallest cut in period $t$, $x^t$, would be such that $x_i^t< x^t<\hat x_i^t$. Thus, the rest of the cake $[x^t,1]$ is such that
$$
u_i([x^t,1];\theta_i) < \frac{n-t}{n-t+1} u_i([x^{t-1},1];\theta_i)
$$

In period $t+1$, agent $i$ cannot choose a manipulation $\hat x_i^{t+1}\leq x_i^{t+1}$, as per the previous step he would receive a smaller utility, so agent $i$ must choose a manipulation $\hat x_i^{t+1}>x_i^{t+1}$.


But in the worst-case scenario, the smallest cut in period $t+1$ is $x^{t+1}$ such that $x_i^{t+1} < x^{t+1} < \hat x_i^{t+1}$, so that the rest of the cake $[x^{t+1},1]$ is worth less than 
$$
u_i([x^{t+1},1];\theta_i) < \frac{n-t-1}{n-t} u_i([x^{t},1];\theta_i)
$$

Following this argument, we see that agent $i$'s only alternative is to receive the last piece of cake $[x^{n-1},1]$, which is, by construction 
$$
u_i([x^{n-1},1];\theta_i) <  u_i([x^{n-2},x_i^{n-1}];\theta_i) < \ldots < u_i([x^{t-1},x_i^{t}];\theta_i) =\frac{u_i([x^{t-1},1];\theta_i)}{n-t+1}
$$

This concludes the proof that no manipulation yields a higher utility than truth-telling in the worst-case scenario, so inequality \eqref{eq:nom-inf} is satisfied.

Next, we show that no manipulation yields a higher utility in the best-case scenario.
We use the following lemma.

\begin{lemma}
At period $t$, 
\begin{equation*}
\sup_{\theta_{-i}} \, u_i(\ML(E(\theta_i,\theta_{-i}));\theta_i) = u_i([x^{t-1},1];\theta_i)
\end{equation*}
This is, agent $i$, by being truthful in period $t$, can expect (in the best-case scenario) to obtain the whole cake available in period $t$.
\end{lemma}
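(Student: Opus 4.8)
I would prove the two inequalities separately. The bound $\sup_{\theta_{-i}} u_i(\ML(E(\theta_i,\theta_{-i}));\theta_i)\le u_i([x^{t-1},1];\theta_i)$ is immediate: if agent $i$ is still present at the start of period $t$, then the piece he is eventually allocated is one of the intervals $(x^{s-1},x^{s}]$ with $s\ge t$, or the terminal interval $(x^{n-1},1]$; each of these is a subinterval of $[x^{t-1},1]$, so since $u_i$ is a monotone measure, $u_i(X_i;\theta_i)\le u_i([x^{t-1},1];\theta_i)$. The content of the lemma is the reverse inequality, and for that it suffices to show that for every $\delta>0$ there is a profile $\theta_{-i}$ of types for the remaining agents under which truthful agent $i$ receives a piece of value at least $u_i([x^{t-1},1];\theta_i)-\delta$.

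Write $a:=x^{t-1}$ and let $m:=n-t+1$ be the number of remaining agents. By non-atomicity of $u_i$, pick $b\in(a,1]$ with $u_i([a,b];\theta_i)<\delta$, shrinking $\delta$ if needed so that $\delta<u_i([a,1];\theta_i)/m$. Now give each of the $m-1$ agents other than $i$ a valuation supported inside $[a,b]$ --- for instance a rescaling of the uniform measure on $[a,b]$ --- with the supports or scalings perturbed slightly so that no two of these agents ever submit the same cut point (this disposes of any tie-breaking issue). Dynamic consistency of all answers is automatic, since every agent's responses are generated by a fixed measure.

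The key step is an induction on the period $s=t,t+1,\dots,n-1$ showing that (i) every remaining ``other'' agent submits a cut lying in $[a,b]$, and (ii) truthful agent $i$ submits a cut strictly to the right of $b$. For (i): the query $\mathrm{cut}(j;x^{s-1},u_j([x^{s-1},1];\theta_j)/(n-s+1))$ with $x^{s-1}\in[a,b]$ returns a point in $[x^{s-1},b]\subseteq[a,b]$, because $u_j$ lives on $[a,b]$. For (ii): $u_i([x^{s-1},b];\theta_i)\le u_i([a,b];\theta_i)<\delta<u_i([x^{s-1},1];\theta_i)/(n-s+1)$, where the last inequality uses $u_i([x^{s-1},1];\theta_i)\ge u_i([b,1];\theta_i)=u_i([a,1];\theta_i)-\delta$ together with the choice of $\delta$. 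Hence at each period $s\ge t$ the smallest cut is submitted by one of the other agents, who leaves with a subinterval of $[a,b]$, while agent $i$ survives; by induction $x^{s}\in[a,b]$ throughout. Agent $i$ therefore reaches the final period and receives $(x^{n-1},1]$ with $x^{n-1}\in[a,b]$, so $u_i((x^{n-1},1];\theta_i)\ge u_i([b,1];\theta_i)\ge u_i([a,1];\theta_i)-\delta=u_i([x^{t-1},1];\theta_i)-\delta$. Letting $\delta\to 0$ and combining with the upper bound yields the stated equality (the supremum is approached but not attained).

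The main obstacle, and it is a mild one, is step (ii) of the induction: a single choice of $\delta$ must keep the truthful agent's cut to the right of $b$ in \emph{every} one of the remaining periods simultaneously, which is why $\delta$ is bounded in terms of $u_i([a,1];\theta_i)/m$ rather than chosen period by period. The perturbation that rules out ties among the other agents is a routine technicality.
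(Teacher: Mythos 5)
Your proposal is correct and follows essentially the same route as the paper: the paper's proof also realizes the supremum by letting all other agents' cuts be epsilon-increments just to the right of $x^{t-1}$, so that they successively leave with worthless slivers and agent $i$ ends up with $(x^{n-1},1]$, worth nearly $u_i([x^{t-1},1];\theta_i)$. Your write-up is simply a more careful rendering of that idea (explicit upper bound, a single $\delta$ valid across all remaining periods via the supporting types on $[a,b]$, and the tie-breaking perturbation), and the uniform choice $\delta<u_i([a,1];\theta_i)/m$ does suffice since the divisor $n-s+1$ only shrinks after period $t$.
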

\begin{proof}
 Let us remember that truthful agent $i$ cuts the cake at a point $x_i^t$ such that $u_i([x^{t-1},1];\theta_i)=\frac{u_i([x^{t-1},1];\theta_i)}{n-t+1}$. In the best-case scenario, the smallest cut at period $t$ is $x^t<x_i^t$ such that $x^t=x^{t-1}+\epsilon$. This guarantees that $u_i([x^{t-1},1];\theta_i) + \epsilon' \approx u_i([x^{t},1];\theta_i)$ for an arbitrarily small $\epsilon'$, i.e. no valuable piece of cake for agent $i$ has been allocated. That we can find an $\epsilon$ so small comes directly from the standard assumption that the utilities are divisible.
	
The best-case scenario is that all further smallest cake cuts $x^{t+1},x^{t+2},\ldots$ are epsilon increments of $x^t$, such that in the end agent $i$ receives $X_i=[x^{n-1},1]$, which gives him a utility arbitrarily close to $u_i([x^t,1])$, i.e. the utility of eating all of the remaining cake available. 
\end{proof}
Since this is the maximum utility attainable,
inequality \eqref{eq:nom-sup} in the NOM definition is satisfied.
This concludes the proof that no manipulation gives a higher utility to a truthful agent in the best-case scenario.

We conclude that no manipulation is better than truth-telling in either the best or the worst-case scenario, thus no manipulation is obvious and leftmost leaves is NOM.

\begin{remark}
A minor modification of leftmost leaves, sometimes used in the literature \citep{procaccia2016}, destroys the NOM of the procedure. In this modification, at each period $t$, agents are asked $\text{cut}(i ; x^{t-1},1/n)$.

To see that this variant is obviously manipulable, consider the case of an agent who has uniform preferences over the whole $[0,1]$ interval and has to cut the cake against 4 other agents. Her first cut must be $x^1_i=0.2$, which guarantees a utility of 0.2. However, suppose the lowest cut (submitted by someone else) was instead $x^1=0.1$. In period $t=2$, the remaining cake is $[0.1,1]$ and if agent $i$ is truthful, she should cut the cake at $x_i^2=0.3$ so that $u_i([0.1,0.3];\theta_i)=1/5=0.2$. However, there is a manipulation that is guaranteed to yield a higher utility in the worst-case scenario. If she cuts the cake instead at $\hat x_i^2=0.325$ (i.e. the point at which $u_i([0.1,\hat x_i^2];\theta_i)=1/4$), in the worst-case scenario, in which her cut is the lowest, she would guarantee herself a utility of $0.225 > 0.2$. If her cut was not the lowest, by continuing to cut the cake at the point $\hat x_i^t$ such that $x^t_i=\frac{([x^{t-1},1])}{n-t+1}$ for all subsequent $t$, she could make sure to receive a utility of at least $0.225$ (Lemma \ref{lem:inf}), which is larger than the worst-case scenario utility received by being truthful, 0.2. We conclude that the modified version of leftmost leaves is obviously manipulable, and thus the choice of which query to make at each step of the mechanism is crucial to make the leftmost leaves procedure not-obviously manipulable. 
\end{remark}

\section{Direct-revelation mechanisms}
\label{sec:envyfreeness}
While leftmost leaves is proportional and connected, it does not satisfy other desirable properties such as \emph{envy-freeness} (no agent prefers the piece of cake received by someone else over his own piece) and \emph{Pareto-optimality} (no other allocation is better for one agent and not worse for the others).

In the Robertson-Webb model, we could not yet find NOM mechanisms satisfying these properties. 
For example, the classic mechanism of Selfridge-Conway for three agents (see \citealp{brams1996} for a detailed description)
is envy-free, but it is obviously-manipulable. 
In this mechanism, the first agent cuts the cake into three pieces of equal worth. A truthful agent knows that one of those pieces will never belong to him, and thus he can achieve a maximum utility of at most 0.67. However, a lying agent can cut the cake in one piece of value $1-\epsilon$, and two pieces of almost no value at all. In the best case scenario, he will keep the most valued piece entirely, showing that the Selfridge-Conway procedure is obviously manipulable.

Interestingly, NOM is easier to achieve in the direct-revelation model.

\begin{lemma}
\label{lem:direct-inf}
Every direct-revelation mechanism that always returns proportional allocations satisfies inequality \eqref{eq:nom-inf}.
\end{lemma}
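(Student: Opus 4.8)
The plan is to exploit the defining feature of a direct-revelation mechanism: there is a single non-terminal node $h_0$ at which every agent simultaneously reports a full type, so a manipulation by agent $i$ is simply a report $\theta_i' \neq \theta_i$, and the mechanism's output depends only on the reported profile. Fix an agent $i$, a true type $\theta_i$, and any (profitable) manipulation $\theta_i'$. I want to show
\[
\inf_{\theta_{-i}} u_i\bigl(M_i(E(\theta_i',\theta_{-i}));\theta_i\bigr) \leq \inf_{\theta_{-i}} u_i\bigl(M_i(E(\theta_i,\theta_{-i}));\theta_i\bigr).
\]
The key observation is that the right-hand side is already pinned down by proportionality. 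Since $M$ always returns proportional allocations, a \emph{truthful} agent $i$ is guaranteed $u_i(M_i(E(\theta_i,\theta_{-i}));\theta_i) \geq 1/n$ for every $\theta_{-i}$, so the infimum on the right is at least $1/n$. It therefore suffices to show that the infimum on the left is at most $1/n$ — i.e., that when agent $i$ misreports, there is some profile of opponents' types under which his realized utility (measured by his \emph{true} $\theta_i$) is at most $1/n$, or can be driven arbitrarily close to $1/n$ from below in the limit.

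The main step is to construct such an adversarial $\theta_{-i}$. The natural choice is to let every other agent report (truthfully) a type that agrees with agent $i$'s \emph{reported} type $\theta_i'$: set $\theta_j = \theta_i'$ for all $j \neq i$. Then the reported profile $(\theta_i',\theta_i',\dots,\theta_i')$ is the fully symmetric profile in which all $n$ agents look identical. Now apply proportionality from the viewpoint of one of the other agents, say agent $j$: since $M$ is proportional and agent $j$ reports truthfully the type $\theta_i'$, agent $j$ receives a piece worth at least $1/n$ according to $\theta_i'$. This holds simultaneously for all $n-1$ opponents, so together they consume pieces of total $\theta_i'$-value at least $(n-1)/n$, leaving agent $i$ a piece of $\theta_i'$-value at most $1/n$. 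The remaining point is to pass from "value at most $1/n$ under $\theta_i'$" to "value at most (approximately) $1/n$ under the true $\theta_i$", which is where a small amount of care is needed: $\theta_i$ and $\theta_i'$ are different measures, so a piece that is small under $\theta_i'$ need not be small under $\theta_i$. I would handle this by refining the adversarial construction — rather than making all opponents identical to $\theta_i'$, choose their reported types so that the specific piece $M$ hands to agent $i$ is forced to lie in a region that $\theta_i$ also values at roughly $1/n$, using non-atomicity/divisibility of $\theta_i$ to split the cake into $n$ $\theta_i$-equal chunks and arranging opponents' reports to "claim" $n-1$ of them. Taking an infimum (or a limit of such constructions) then gives the desired bound of $1/n$ on the left-hand side, completing the argument.

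The step I expect to be the genuine obstacle is precisely this last one: controlling agent $i$'s true utility for the piece the mechanism returns, given that the mechanism is an arbitrary black box and we only know it outputs \emph{some} proportional allocation. The symmetric-profile trick cleanly bounds the $\theta_i'$-value of agent $i$'s share, but converting that into a bound on its $\theta_i$-value requires either (i) engineering opponents' types so that the only proportional allocations leave agent $i$ with a $\theta_i$-small piece, or (ii) a limiting argument in which opponents' reported types concentrate their value on whatever region the mechanism would otherwise give to agent $i$. Everything else — the reduction to a single report, the appeal to proportionality, the accounting that $n-1$ shares of value $\ge 1/n$ leave at most $1/n$ — is routine.
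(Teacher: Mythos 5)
There is a genuine gap, and it is exactly the step you flagged as the obstacle. Your adversarial profile sets every opponent's type to agent $i$'s \emph{reported} type $\theta_i'$, which only bounds the $\theta_i'$-value of the piece agent $i$ receives; as you note, that says nothing about its value under the true measure $\theta_i$. Neither of your proposed repairs closes this. The ``split the cake into $n$ $\theta_i$-equal chunks and have opponents claim $n-1$ of them'' idea is not carried out and is not obviously implementable against an arbitrary black-box proportional mechanism, and the limiting argument in which opponents concentrate value on ``whatever region the mechanism would otherwise give to agent $i$'' is circular: changing the opponents' reports changes the mechanism's output, so there is no fixed target region to concentrate on.

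The fix is a one-line change to your construction: let every opponent report agent $i$'s \emph{true} type $\theta_i$ rather than the reported one, i.e.\ $\theta_j = \theta_i$ for all $j \neq i$. Proportionality then guarantees each of these $n-1$ truthful opponents a piece worth at least $1/n$ \emph{in the measure $u_i$}, so the piece left for agent $i$ is worth at most $1/n$ in his true measure --- no conversion between measures is ever needed. This is precisely the paper's proof. Everything else in your write-up (the reduction to a single report, the bound of $1/n$ from below on the right-hand side via proportionality, the accounting that $n-1$ shares of value at least $1/n$ leave at most $1/n$) is correct and matches the paper.
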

\begin{proof}
By proportionality, a truthful agent always receives a utility of at least $1/n$.
Consider now an untruthful agent $i$ who reports a type  $\theta_i'\neq \theta_i$ (equivalently, reports a utility function $u_i'\neq u_i$).
Consider the case when all other $n-1$ agents have a utility of $u_i$ (the true utility of agent $i$). 
A proportional mechanism must give each of these $n-1$ agents a piece with a value, by the function $u_i$, of at least $1/n$.
Hence, the piece remaining for agent $i$ has a value, by the function $u_i$, of at most $1/n$.
Hence, in inequality \eqref{eq:nom-inf}, the infimum is at most $1/n$ at the left and at least $1/n$ at the right, and the inequality holds.
\end{proof}

\begin{lemma}
\label{lem:direct-sup}
Every direct-revelation mechanism that always returns Pareto-optimal allocations satisfies inequality \eqref{eq:nom-sup}.
\end{lemma}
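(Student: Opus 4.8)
The plan is to prove something a little stronger than the inequality itself: for any direct-revelation mechanism $M$ that returns only Pareto-optimal allocations, the right-hand side of \eqref{eq:nom-sup} already equals $1$, i.e. $\sup_{\theta_{-i}} u_i(M_i(E(\theta_i,\theta_{-i}));\theta_i)=1$. Since no allocation can give agent $i$ more than the value of the entire cake, $u_i([0,1];\theta_i)=1$, the left-hand side of \eqref{eq:nom-sup} is at most $1$ for every report $\theta'_i$, so the inequality follows immediately. This mirrors the ``best case is the whole remaining cake'' phenomenon already used in the proof of Theorem \ref{thm:leftmost-leaves}.

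To show that the supremum is $1$, I would fix an arbitrary $\epsilon>0$ and consider the profile $\theta^{\epsilon}_{-i}$ in which every agent $j\neq i$ has a non-atomic utility concentrated on $[0,\epsilon]$ (say, uniform on $[0,\epsilon]$ and zero on $(\epsilon,1]$). When agent $i$ is truthful, the mechanism sees $(\theta_i,\theta^{\epsilon}_{-i})$ and, by hypothesis, outputs an allocation $X$ that is Pareto-optimal with respect to it. I claim agent $i$ then receives all of $[\epsilon,1]$ up to a $u_i$-null set. Indeed, let $Z=[\epsilon,1]\setminus X_i$; if $u_i(Z;\theta_i)>0$, then $Z$ is partitioned among the agents $j\neq i$, each of whom values it at $0$, so the allocation $X'$ defined by $X'_i=X_i\cup Z$ and $X'_j=X_j\setminus Z$ for $j\neq i$ is a valid partition that makes agent $i$ strictly better off while leaving every other agent's utility unchanged, contradicting Pareto-optimality of $X$. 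Hence $u_i(M_i(E(\theta_i,\theta^{\epsilon}_{-i}));\theta_i)\geq u_i([\epsilon,1];\theta_i)=1-u_i([0,\epsilon];\theta_i)$.

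Finally I would let $\epsilon\to 0$: since $u_i$ is non-atomic, continuity of the measure gives $u_i([0,\epsilon];\theta_i)\to u_i(\{0\};\theta_i)=0$, so the supremum over $\theta_{-i}$ is $1$, and \eqref{eq:nom-sup} holds for every candidate manipulation. The one point that needs a little care — not a genuine difficulty — is the reallocation step: one must check that $X'$ is still an admissible ordered partition into $n$ pieces (if non-empty pieces are required, hand any agent $j$ whose piece lay entirely inside $Z$ a negligible sliver of $[0,\epsilon]$, which $i$ does not value), and that ``Pareto-optimal'' is meant with respect to the reported profile, which coincides with the true profile because opponents are assumed truthful in the definitions \eqref{eq:nom-inf}--\eqref{eq:nom-sup}. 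Combined with Lemma \ref{lem:direct-inf}, this yields that every proportional and Pareto-optimal direct-revelation mechanism is not-obviously manipulable.
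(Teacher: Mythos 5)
Your proposal is correct and follows essentially the same route as the paper's proof: take the opponents' valuations concentrated on a vanishing fraction of the cake, observe that Pareto-optimality then forces the mechanism to hand agent $i$ (when truthful) a piece worth almost $1$, so the right-hand supremum of \eqref{eq:nom-sup} equals $1$ and the inequality is automatic. You merely make explicit the $\epsilon$-limit and the Pareto-improvement reallocation that the paper leaves implicit.
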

\begin{proof}
Consider the case when agent $i$ is truthful, all other $n-1$ agents assign a positive value only to a tiny fraction of the cake, and assign a value of $0$ to the rest of the cake.
A Pareto-optimal mechanism must assign almost all the cake to agent $i$. Hence, in inequality \eqref{eq:nom-sup}, the supremum in the right-hand side equals $1$ and the inequality holds.
\end{proof}

\begin{theorem}
\label{thm:nash}
There exists a NOM direct-revelation mechanism that finds envy-free and Pareto-optimal allocations.
\end{theorem}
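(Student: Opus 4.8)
The plan is to produce one direct-revelation mechanism that is simultaneously \emph{envy-free} and \emph{Pareto-optimal}; Lemmas \ref{lem:direct-inf} and \ref{lem:direct-sup} then immediately upgrade it to NOM, once we observe that envy-freeness implies proportionality. The natural candidate is the \emph{Nash-optimal} rule: on a reported profile $(u_1,\dots,u_n)$, output an allocation $X$ maximizing $\prod_{i\in N} u_i(X_i)$, with ties broken by a fixed deterministic rule (say, lexicographically in the induced cut points). I would first note that a maximizer exists: by Lyapunov's convexity theorem the set of attainable utility vectors $\{(u_1(X_1),\dots,u_n(X_n)):X\in\mathcal X\}$ is compact, so the continuous objective attains its maximum; since any proportional allocation yields a strictly positive product, the maximum value is positive, and hence every agent gets a strictly positive share at any Nash-optimal $X$. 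The maximizing utility vector is unique even when the allocation is not, so the tie-break only affects null pieces and the mechanism is well defined.

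Next I would check the two fairness properties, both of which are classical (Weller 1985; Varian). \textbf{Pareto-optimality:} if an allocation $Y$ weakly dominated $X$ for everyone and strictly helped some agent, then, all factors being positive, $\prod_i u_i(Y_i)>\prod_i u_i(X_i)$, contradicting maximality. \textbf{Envy-freeness:} suppose $u_i(X_j)>u_i(X_i)$ for some $i\neq j$. By Lyapunov applied to $X_j$ under the two measures $u_i,u_j$, for every $\delta\in(0,1)$ there is a piece $S\subseteq X_j$ with $u_j(S)=\delta\,u_j(X_j)$ and $u_i(S)=\delta\,u_i(X_j)$. Transferring $S$ from $j$ to $i$ multiplies the $j$-factor by $(1-\delta)$ and the $i$-factor by $1+\delta\,u_i(X_j)/u_i(X_i)=1+\delta r$ with $r>1$; the product changes by the factor $(1-\delta)(1+\delta r)=1+\delta(r-1)-\delta^2 r$, which exceeds $1$ for small $\delta$, contradicting Nash-optimality. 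Hence $X$ is envy-free.

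Finally I would close the argument. An envy-free partition of the whole cake is proportional: summing $u_i(X_i)\ge u_i(X_j)$ over $j\in N$ gives $n\,u_i(X_i)\ge\sum_{j} u_i(X_j)=u_i([0,1])=1$. So the Nash-optimal mechanism is proportional, hence by Lemma \ref{lem:direct-inf} it satisfies \eqref{eq:nom-inf}; it is Pareto-optimal, hence by Lemma \ref{lem:direct-sup} it satisfies \eqref{eq:nom-sup}. As both inequalities hold for every agent $i$ and every profitable manipulation $\theta_i'$, no manipulation is obvious, so the mechanism is NOM, while being envy-free and Pareto-optimal by construction.

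I do not expect a deep obstacle: given the two lemmas the content reduces to ``an envy-free and Pareto-optimal cake division exists'' together with ``envy-free $\Rightarrow$ proportional''. The places that need care are (i) invoking the existence of an envy-free and Pareto-optimal allocation in the non-atomic setting and turning the (possibly multi-valued) Nash optimum into a genuine deterministic mechanism via a tie-break; and (ii) confirming that every agent's value is strictly positive at a Nash optimum, which is exactly what makes both the Pareto and the envy-freeness transfer arguments valid --- if one wished to allow zero-value shares, one would instead run the same arguments on a lexicographic refinement of the Nash rule.
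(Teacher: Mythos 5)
Your proposal is correct and follows essentially the same route as the paper: take the Nash-optimal (maximum product) rule, note it is Pareto-optimal and envy-free (hence proportional), and invoke Lemmas \ref{lem:direct-inf} and \ref{lem:direct-sup}. The only difference is that you prove the envy-freeness and Pareto-optimality of the Nash rule from first principles (via Lyapunov and the transfer argument), where the paper simply cites these as known facts.
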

\begin{proof}
The \emph{Nash-optimal mechanism} is a direct-revelation mechanism that, given $n$ utility functions, selects an allocation that maximizes the product of utilities. Such an allocation is known to be Pareto-optimal and envy-free 
\citep{segal2019monotonicity}, hence it is also proportional.
Hence, by lemmas \ref{lem:direct-inf} and \ref{lem:direct-sup}, the mechanism it is NOM.
\end{proof}
When the utility functions are \emph{picewise-constant}, the Nash optimal mechanism can be computed by an efficient algorithm described by  \cite{aziz2014cake}.

In contrast to the leftmost leaves rule, the Nash-optimal rule may return disconnected pieces.
Moreover, it is known that \emph{any} Pareto-optimal envy-free rule may have to return disconnected pieces 
(see Example 5.1 in \citet{segal2018resource}).
Since Pareto-optimality is crucial in the proof of Theorem \ref{thm:nash},
it remains an open question whether there exists a NOM mechanism that is both envy-free and connected.

A related question is whether there exists an algorithm that finds proportional, Pareto-optimal and connected allocations. If such an algorithm exists, then by Lemmas \ref{lem:direct-inf} and \ref{lem:direct-sup}, it is NOM.

\section{Conclusion and Experimental Evidence}
Although it is impossible to cut a cake in a strategy-proof manner that is not completely unfair to some agent, we can divide a cake in a fair, proportional way that cannot be obviously manipulated using an easily implementable mechanism called leftmost leaves. 

Troyan and Morril's notion of NOM not only allows us to escape the tradeoff between fairness and incentives in cake-cutting, but also helps us to better understand real-life behavior when dividing an heterogeneous good. In the first lab experiment comparing cake-cutting mechanisms, \cite{kyropoulou2019fair} report truthful behavior in NOM cake-cutting mechanisms in 44\% of the cases, whereas the respective number for OM ones is of 31\% (the difference is statistically significant with a p-value of 0.0000). In particular, leftmost leaves was significantly less manipulated than Banach-Knaster last diminisher when agents played against 2 opponents (difference of 29 percentage points, p-value of 0.0000) and 3 opponents (difference of 16 percentage points, p-value of 0.0000). The Even-Paz modification of leftmost leaves (which is NOM) was also significantly less manipulated than Banach-Knaster (difference of 35 percentage points, p-value of 0.0000).

Although in general NOM gives us testable predictions that map relatively well to observed behavior, it is intriguing that the Selfridge-Conway procedure also reports high rates of truth-telling, comparable to those of NOM mechanisms. Explaining this puzzling phenomenon remains an open problem which we leave for future research.


	\section*{References}
	\setlength{\bibsep}{0cm}
	\bibliographystyle{ecta}
	\bibliography{bibliog}
	
\end{document}